\documentclass{article}
\usepackage{amsthm}
\usepackage{amsmath}
\usepackage{amssymb}

\newtheorem{thm}{Theorem}


\newcommand{\skl}[1]{\left\langle #1\right\rangle}
\newcommand{\fanin}[1]{\mathrm{fanin}(#1)}
\newcommand{\fanout}[1]{\mathrm{fanout}(#1)}
\newcommand{\x}{\vec{x}}
\newcommand{\y}{\vec{y}}

\newcommand{\e}{\vec{e}}
\renewcommand{\a}{\vec{a}}
\renewcommand{\u}{\vec{u}}

\newcommand{\nulis}{\vec{0}}
\newcommand{\gf}{\{0,1\}} 
\newcommand{\size}[1]{s(#1)} 
\newcommand{\sized}[2]{s_{#2}(#1)} 

\newcommand{\image}[1]{\mathrm{Im}(#1)} 
\newcommand{\oper}[1]{\mu(#1)} 


\newcommand{\STOC}[1]{Proc. #1 STOC}
\newcommand{\MFCS}[1]{Proc. #1 MFCS}

\begin{document}

\title{Circuits with Arbitrary Gates for Random Operators
\thanks{Research of both authors supported by a DFG grant SCHN~503/4-1.
University of Frankfurt, Institute of Computer Science,
Frankfurt am Main, Germany}}

\author{S.~Jukna \and G.~Schnitger}

\date{}

\maketitle

\begin{abstract}
We consider boolean circuits computing $n$-operators 
$f:\{0,1\}^n\to \{0,1\}^n$.
As gates we allow arbitrary boolean functions; neither fanin nor fanout 
of gates is restricted.
An operator is linear if it computes $n$ linear forms, that is, 
computes a matrix-vector product $A\x$ over $GF(2)$.

We prove the existence of
$n$-operators requiring about $n^2$ wires in any circuit, and
linear $n$-operators requiring about $n^2/\log n$ wires in depth-$2$ 
circuits, if either all output gates or all gates on the middle layer 
are linear.  
\end{abstract}

\section{Introduction}

We consider general circuits computing $n$-operators
$f:\{0,1\}^n\to\{0,1\}^n$. As gates we allow \emph{arbitrary} boolean
functions of their inputs; there is no restriction on their fanin or
fanout.  Thus, the phenomenon which causes complexity of such circuits
is \emph{information transfer} rather than \emph{information
  processing} as in the case of single functions.  Such a circuit is a
directed acyclic graph with $n$ input nodes $x_1,\ldots,x_n$ and $n$
output nodes $y_1,\ldots,y_n$. Each non-input node computes some
boolean function of its predecessors. A circuit computes an operator
$f=(f_1,\ldots,f_n)$ if, for
all $i=1,\ldots,n$, the boolean function computed at the $i$th output
node $y_i$ is the $i$th component $f_i$ of the
operator~$f$.  The \emph{depth} of a circuit is the
largest number of wires in a path from an input to an output node.

The \emph{size} of a circuit is the total number of wires in it.  We
will denote by $\sized{f}{d}$ the smallest number of wires in a
general circuit of depth at most $d$ computing $f$. If there are no
restrictions on the depth, the corresponding measure is denoted by
$\size{f}$.  Note that $\size{f}\leq \sized{f}{1}\leq n^2$ holds for
any $n$-operator, so quadratic lower bounds are the highest ones.

Circuits of depth $2$ constitute the first non-trivial model.
Interest in depth-$2$ circuits comes from the following important
result of Valiant~\cite{valiant}: If in every depth-$2$ circuit,
computing $f$ with $O(n/\ln\ln n)$ gates on the middle layer, at least
$n^{1+\Omega(1)}$ direct wires must connect inputs with output gates,
then $f$ cannot be computed by log-depth circuits with a linear number
of fanin-$2$ gates.  To prove a super-linear lower bound for log-depth
circuits is an old and well-known problem in circuit complexity.

Super-linear lower bounds up to $\sized{f}{2}=\Omega(n\log^2 n)$ where
proved using graph-theoretic arguments by analyzing some
super-concentration properties of the circuit as a graph
\cite{DDPW,Pippenger1,Pippenger2,PS,Pudlak,AP,PRS,RT,RS}.  Higher
lower bounds of the form $\sized{f}{2}=\Omega(n^{3/2})$ were recently
proved using information theoretical arguments~\cite{cherukhin,juk}.
For larger depth $d$ known lower bounds are only slightly non-linear.
All these bounds, however, are on the \emph{total} number of wires, so
they still have no consequences for log-depth circuits.

In fact, in the class of general circuits, even the question about the
complexity of a \emph{random} operator remained unclear.  In
particular, it was unclear whether operators requiring a quadratic
number of wires (even in depth $2$) exist at all?

\section{Circuits for general operators}

Note that a direct counting argument, as in the case of constant fanin
circuits, does not work for general circuits: already for $d>n+\log
n$, the number $2^{2^{d}}$ of possible boolean functions that may be
assigned to a node of fanin $d$ may be larger than the total number
$2^{n2^n}$ of $n$-operators.

Our first result is an observation that this bad situation can be
excluded by just turning the power of circuits against themselves to
ensure that, in an optimal circuit, no gate can have fanin larger
than~$n$.  This leads us to

\begin{thm}\label{thm:1}
  For almost all $n$-operators $f$, $\size{f}=\Omega(n^2)$.
\end{thm}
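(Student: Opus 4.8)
The plan is to combine the fanin-reduction observation advertised just before the theorem with a direct counting argument; the whole point of that observation is to make counting feasible, so I would establish it first. Call a circuit \emph{normalized} if every gate has fanin at most $n$. I claim $\size{f}$ is always attained by a normalized circuit. Indeed, take any optimal circuit and a gate $g$ with $\fanin{g}>n$. The value produced at $g$ is some boolean function $h\colon\gf^n\to\gf$ of the original inputs $x_1,\dots,x_n$; since gates may be arbitrary, I may delete the $\fanin{g}$ incoming wires of $g$ and feed $g$ directly with $x_1,\dots,x_n$, letting it compute $h$. This leaves the operator unchanged, keeps the graph acyclic (the new wires emanate from source nodes), and replaces more than $n$ wires by at most $n$, so the total number of wires strictly decreases. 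Iterating over all high-fanin gates terminates and yields a normalized circuit of no larger size.

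With this in hand I would bound the number $C(s)$ of operators computable by normalized circuits with at most $s$ wires. Such a circuit is specified by the number $m$ of gates together with their fanins $d_1,\dots,d_m\le n$ (with $\sum_i d_i\le s$), a choice of $d_i$ predecessors for each gate among the $N=n+m$ nodes, a boolean function of the $d_i$ inputs at each gate, and a designation of the $n$ output gates. At most two gates compute constants (fanin $0$); every other gate has fanin $\ge 1$, so $m\le s$ and $N=O(n^2)$ once $s\le n^2$. The only dangerous factor is $\prod_i 2^{2^{d_i}}$, whose base-$2$ logarithm is $\sum_i 2^{d_i}$. Using the estimate $2^d/d\le 2^n/n$ valid for $1\le d\le n$, this is at most $(2^n/n)\sum_i d_i\le s\,2^n/n$. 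Every remaining factor (the wiring $\le N^{\sum_i d_i}$, the choice of the fanin sequence $\le n^m$, and the output designation $\le N^n$) contributes only polynomially large exponents, i.e.\ $O(n^2\blog n)$ to the logarithm, which is negligible next to $s\,2^n/n$.

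Combining these, $\blog C(s)\le s\,2^n/n + O(n^2\blog n)$. Taking $s=cn^2$ for any constant $c<1$ gives $\blog C(s)\le (c+o(1))\,n2^n$, strictly less than $n2^n$, the base-$2$ logarithm of the total number $2^{n2^n}$ of $n$-operators, for all large $n$. Hence the fraction of $n$-operators computable with at most $cn^2$ wires is at most $2^{-(1-c-o(1))\,n2^n}\to 0$, so almost all $f$ satisfy $\size{f}>cn^2=\Omega(n^2)$.

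I expect the only real obstacle to be the normalization step, since without a fanin bound a single gate of fanin $d$ already ranges over $2^{2^d}$ functions and can by itself exceed the total number of operators --- exactly the bad situation noted for $d>n+\blog n$. The counting is then routine, the crucial quantitative input being the convexity estimate $2^d/d\le 2^n/n$, which pins the worst case at ``$s/n$ gates of fanin $n$'' and forces the threshold at $s\approx n^2$.
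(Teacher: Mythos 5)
Your proposal is correct and follows essentially the same route as the paper: normalize an optimal circuit so that every gate has fanin at most $n$ (replacing the incoming wires of any high-fanin gate by direct wires from the $n$ inputs), then count the normalized circuits with at most $s$ wires and compare against the $2^{n2^n}$ operators, with the dominant term being $\sum_i 2^{d_i}$. The only difference is how that term is bounded: you use the monotonicity estimate $2^{d}\le d\cdot 2^n/n$ for $1\le d\le n$, while the paper splits the gates according to whether their fanin exceeds $2L/n$ (at most $n/2$ can); your version is slightly cleaner and even yields the sharper conclusion that almost all operators need $(1-o(1))n^2$ wires.
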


\begin{proof}
  Let $\oper{L}$ be the number of different $n$-operators computable
  by boolean circuits with at most $L$ wires. Our goal is to upper
  bound this number in terms of $n$ and $L$, and compare this bound
  with the total number $2^{n2^n}$ of $n$-operators.

  Take an optimal circuit with $\ell\leq L$ wires computing some
  $n$-operator; hence, $\ell\leq n^2$.  Then $\ell=\sum_{i=1}^m d_i$,
  where $d_1,\ldots,d_m$ are the fanins of its gates.  It is clear
  that we need $m\geq n$ gates, since we must have $n$ input gates. On
  the other hand, $m\leq \ell+n+2\leq 2n^2$ gates are always enough
  since every non-input gate, besides two possible constant gates,
  must have nonzero fanin.

  We now make use of the fact that the gates in our circuits may be
  \emph{arbitrary} boolean functions: This allows us to assume that
  $d_i\leq n$ for all $i$.  Indeed, if $d_i>n$, then we can replace
  the $i$th gate by the boolean function computed at this gate and
  join it to all $n$ input variables; when doing this, the total
  number of wires in the circuit can only decrease.

  The number of sequences $d_1,\ldots,d_m$ of fanins with $0\leq
  d_i\leq n$ does not exceed $(n+1)^m$. For each such sequence and for
  each $i=1,\ldots,m$, there are at most ${{m}\choose{d_i}}\leq
  m^{d_i}$ possibilities to chose the set of inputs for the $i$th node
  and at most $2^{2^{d_i}}$ possibilities to assign a boolean function
  to this node. Hence,
  \[
  \oper{L}\leq (n+1)^m\prod_{i=1}^mm^{d_i} \prod_{i=1}^m2^{2^{d_i}}
  =(n+1)^m m^{\sum_{i=1}^m d_i}2^{\sum_{i=1}^m2^{d_i}}\,.
  \]
  Since $\sum_{i=1}^m d_i\leq L\leq n^2$ and $m\leq 2n^2$, this yields
  \[
  \log_2 \oper{L}\leq \sum_{i=1}^m 2^{d_i}+O(n^2\log_2 n)\,.
  \]
  We now observe that at most $n/2$ nodes can have fanin larger than
  $2L/n$, for otherwise we would have more than $(2L/n)\cdot(n/2)=L$
  wires in total. Since $m\leq 2n^2$ and since the fanin of each gate
  does not exceed $n$, we obtain that
  \[
  \sum_{i=1}^m 2^{d_i}\leq (m-n/2)2^{2L/n}+(n/2)2^n \leq
  2n^24^{L/n}+n2^{n-1}\,.
  \]
  Hence,
  \begin{equation}\label{eq:1}
    \log_2 \oper{L}\leq 2n^24^{L/n}+n2^{n-1}+ O(n^2\log_2 n)\, .
  \end{equation}
  Since the total number of operators $f:\{0,1\}^n\to\{0,1\}^n$ is
  $2^{n2^n}$, the smallest number $L$ of wires sufficient to compute
  all of them must satisfy $\log_2\oper{L}\geq n2^n$.  By
  (\ref{eq:1}), this implies
  \[
  2n^24^{L/n}\geq n2^{n-1}-O(n^2\log_2 n)\,.
  \]
  Dividing both sides by $2n^2$, we obtain that $4^{L/n}=
  \Omega(2^n/n)$, and hence, $L=\Omega(n^2)$.
\end{proof}

\section{Circuits for linear operators}

An important class of operators are \emph{linear} ones.  Each such
operator computes $n$ linear forms, that is, computes a matrix-vector product
$f_A(\x)=A\x$ over $GF(2)$ where $A$ is an $n\times n$ $(0,1)$-matrix.
We are interested in the complexity $\sized{f_A}{2}$ of such operators
in the class of depth-$2$ circuits.

If all gates are required to be \emph{linear} (parities and their
negations), then easy counting shows that some linear operators
require $\Omega(n^2/\log n)$ wires.  It is also known that $O(n^2/\log
n)$ are also sufficient to compute any linear
operator~\cite{tuza,bublitz,AKW}.

But what if we allow arbitrary (non-linear) boolean functions as
gates---can we then compute linear operators $f_A$ more efficiently?
The largest known lower bound for an \emph{explicit} linear operator
$f_A$ has the form $\sized{f_A}{2}=\Omega(n\log n)$~\cite{Pudlak}.
This raises the following question: Do \emph{linear} $n$-operators
requiring $\sized{f_A}{2}=\Omega(n^2/\log n)$ wires exist at all?  We
are only able to answer this question \emph{positively} under the
additional restriction that either all output gates of all gates on
the middle layer must be linear functions.

The next theorem shows that the non-linearity of \emph{middle} gates is no
problem: any such circuit can be transformed into a \emph{linear}
circuit with almost the same number of wires.  Hence, some linear
$n$-operators require about $n^2/\log n$ wires in such circuits.

\begin{thm}\label{thm:2}
  If a depth-$2$ circuit computes a linear $n$-operator and only has linear
  gates on the output layer, then it can be transformed to an
  equivalent linear circuit by adding at most $2n$ new wires.
\end{thm}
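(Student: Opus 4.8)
The plan is to exploit the uniqueness of the algebraic normal form (ANF) --- the multilinear polynomial representation over $GF(2)$ --- of each middle gate. Fix a depth-$2$ circuit computing $f_A$ in which every output gate is linear. Since the circuit has depth $2$, each output gate $y_i$ receives wires only from input variables and from gates $g_k$ on the middle layer; being linear, it computes $y_i=\bigoplus_{j\in S_i}x_j\oplus\bigoplus_{k\in T_i}g_k\oplus c_i$ for some index sets $S_i,T_i$ and a constant $c_i\in\gf$. As this must coincide with the $i$th linear form $f_i=\bigoplus_j a_{ij}x_j$, rearranging shows that $\bigoplus_{k\in T_i}g_k$ equals an affine function $h_i$ of the inputs, for every $i$.

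The key step is a cancellation argument. I would write the ANF of each middle gate as $g_k=\ell_k\oplus N_k$, where $\ell_k$ collects the monomials of degree at most $1$ (the affine part) and $N_k$ collects those of degree at least $2$. Because the ANF is unique and additive over $GF(2)$, the degree-$\ge 2$ part of $\bigoplus_{k\in T_i}g_k$ is exactly $\bigoplus_{k\in T_i}N_k$; since this whole sum equals the affine function $h_i$, we must have $\bigoplus_{k\in T_i}N_k\equiv 0$ and hence $\bigoplus_{k\in T_i}\ell_k=h_i$. The point to stress is that the splitting $g_k=\ell_k\oplus N_k$ is intrinsic to the gate and does not depend on $i$, so a single replacement of $g_k$ serves all output gates that $g_k$ feeds at once.

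I would then build the linear circuit by replacing each middle gate $g_k$ with a single linear gate computing its affine part $\ell_k$. Since $\ell_k$ depends only on variables that already feed $g_k$, no new wires into the middle layer are needed, and the output gates are left untouched; by the cancellation each output now computes $\bigoplus_{j\in S_i}x_j\oplus h_i\oplus c_i=f_i$, so the new circuit is linear and equivalent. The only remaining bookkeeping concerns the affine constants $c_{k,\emptyset}$ produced by the $\ell_k$ together with the $c_i$: at each output these must sum to $0$ because $f_i$ is homogeneous, so they can be folded into the gates. Accounting generously for routing a single constant to each of the $n$ middle and $n$ output gates yields the stated bound of at most $2n$ new wires.

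The main obstacle is the simultaneous-cancellation claim: one must verify that replacing \emph{every} $g_k$ by $\ell_k$ is consistent across all outputs at once, rather than fixing one output at a time. Uniqueness of the ANF handles this cleanly, since it forces $\bigoplus_{k\in T_i}N_k\equiv 0$ independently for each $i$ out of one global splitting of the gates. The secondary, purely technical, point is the constant bookkeeping needed to present the resulting gates as genuine parities and their negations.
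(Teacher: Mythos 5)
Your proof is correct, and in fact it builds exactly the same linear circuit as the paper does: the affine part $\ell_k$ of the ANF of a middle gate $g_k$ is precisely the unique affine function agreeing with $g_k$ on the inputs $\nulis,\e_1,\ldots,\e_n$, namely $g_k(\nulis)\oplus\bigoplus_i x_i\bigl(g_k(\e_i)\oplus g_k(\nulis)\bigr)$, which is what the paper substitutes for $g_k$ (after first normalizing so that $h(\nulis)=\nulis$). What genuinely differs is the verification. The paper never mentions the ANF: after removing direct wires and output negations (this preprocessing is where its $2n$ extra wires go), it writes the output layer as multiplication by the adjacency matrix $B$ and checks $B\cdot h'(\x)=\sum_i x_i\, B\cdot h(\e_i)=\sum_i x_i A\e_i=A\x$ directly from linearity of $A$. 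You instead use functional correctness on \emph{all} inputs to conclude that at each output the sum $\bigoplus_{k\in T_i}g_k$ is affine, and then invoke uniqueness of the ANF to cancel the degree-$\geq 2$ parts, with one global splitting $g_k=\ell_k\oplus N_k$ serving all outputs at once. Both wire counts rest on the same observation (a variable not wired to $g_k$ cannot appear in its ANF, equivalently $h_j(\e_i)=h_j(\nulis)$). The trade-off is this: your route absorbs direct wires and constants painlessly, so it really needs no new wires at all, whereas the paper's route uses correctness of $\Phi$ only on the $n+1$ inputs $\nulis,\e_1,\ldots,\e_n$ --- so it linearizes any circuit with linear output gates that merely \emph{weakly} computes $f_A$ in the sense discussed after the theorem, a strengthening that your appeal to ANF uniqueness (which needs correctness on all $2^n$ inputs to force the nonlinear parts to vanish) does not deliver.
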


\begin{proof}
  Let $A$ be an $n$-by-$n$ $(0,1)$-matrix, and let $\Phi$ be a
  depth-$2$ circuit computing $A\x$. We may assume, for simplicity,
  that there are no direct wires from inputs to outputs: this can be
  easily achieved by adding $n$ new wires on the first level. Assume
  that all output gates of $\Phi$ are linear boolean functions.  By
  adding one constant-$1$ function on the middle layer and at most $n$
  new wires on the second level, we can also assume that each output
  gate computes just the sum modulo $2$ of its inputs (and not the
  negation of this sum).

  Let $h=(h_1,\ldots,h_r):\gf^n\to\gf^r$ be the operator computed by
  the gates on the middle layer. Since $A\nulis=\nulis$ and each
  output gate computes the sum modulo $2$ of its inputs, we may assume
  that $h(\nulis)=\nulis$ as well: If $h_j(\nulis)=1$ for some $j$,
  then replace the function $h_j$ by the function $h_j'$ such that
  $h_j'(\nulis)=0$ and $h_j'(\x)=h_j(\x)$ for all $\x\neq\nulis$.

  Let $B$ be the $n$-by-$r$ adjacency $(0,1)$-matrix of the bipartite
  graph formed by the wires joining the gates on the middle layer with
  those on the output layer.  Then $ A\x=B\cdot h(\x) $ for all
  $\x\in\gf^n$.  Write each vector $\x=(x_1,\ldots,x_n)$ as the linear
  combination $\x=\sum_{i=1}^n x_i\e_i$ of unit vectors
  $\e_1,\ldots,\e_n\in\gf^n$, and replace the operator $h$ computed on
  the middle layer by a \emph{linear} operator
  \[
  h'(\x):=\sum_{i=1}^n x_ih(\e_i) \bmod 2\,.
  \] Hence, $h'(\x)=\x^{\top}M$, where $M$ is an $n\times r$ matrix
  with rows $h(\e_1),\ldots,h(\e_n)$.  Using the linearity of the
  matrix-vector product, we obtain that (with all sums mod~$2$):
  \begin{align*}
    B\cdot h(\x)&= A\cdot \Big(\sum x_i\e_i\Big) =\sum x_i A\e_i =\sum
    x_i B\cdot h(\e_i) =B\cdot h'(\x)\,.
  \end{align*}
  Hence, the new (linear) circuit $\Phi'$ computes $A\x$ as well. It
  remains to show that the number of wires in $\Phi'$ does not exceed
  the number of wires in $\Phi$.

  The wires on the second level haven't changed at all.  To show that
  the \emph{number} of wires on the first level has not increased as
  well, let $\fanout{x_i}$ be the fanout of the $i$th input node
  $x_i$, and $\fanin{h_j}$ the fanin of the $j$th gate $h_j$ on the
  middle layer.  Then $\sum_{i=1}^n
  \fanout{x_i}=\sum_{j=1}^r\fanin{h_j}$ is the total number $L$ of
  wires on the first level.  We know that $h(\nulis)=\nulis$, that is,
  $h_j(\nulis)=0$ for all $j=1,\ldots,r$.  Now we make a simple (but
  crucial) observation: if there is no wire from $x_i$ to $h_j$, then
  $h_j(\e_i)=h_j(\nulis)=0$.  This implies that the $j$th column of
  $M$ can have at most $\fanin{h_j}$ ones. Since the number of wires
  on the first level of $\Phi'$ is just the total number of $1$'s in
  $M$, we are done.
\end{proof}

The second case---when only gates on the middle layer are required to
be linear---is more delicate.  That such circuits \emph{can} be more
powerful than linear ones, was shown in \cite{juk1}. Given a boolean
$n\times n$ matrix $A$, say that a circuit \emph{weakly computes} the
operator $f_A(\x)=A\x$ if it correctly computes it on all $n$ unit
vectors $\e_1,\ldots,\e_n$.  Note that, for \emph{linear} circuits,
this is no relaxation: such a circuit weakly computes $f_A$ iff it
correctly computes $f_A$ on all inputs.  Hence, some linear operators
cannot be weakly computed by \emph{linear} depth-$2$ circuits using
fewer than $\Omega(n^2/\log n)$ wires.  It is however shown in
\cite{juk1} that the situation changes drastically if we only use
linear gates on the middle layer but allow
non-linear gates on the output layer, then \emph{any} linear
$n$-operator can be weakly computed using only $O(n\log n)$ wires.

Still, using Kolmogorov complexity arguments, we can prove that,
for some matrices $A$, such circuits require a quadratic number of wires
to compute the entire operator~$A\x$.

\begin{thm}\label{thm:3}
  If middle gates are required to be linear, then linear $n$-operators
  $f_A$ with $\sized{f_A}{2}=\Omega(n^2/\log n)$ exist.
\end{thm}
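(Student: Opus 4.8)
The plan is to combine an incompressibility (Kolmogorov complexity) argument with a linear-algebraic constraint coming from the fact that the middle gates are linear. First I would fix a matrix $A\in\gf^{n\times n}$ of maximal Kolmogorov complexity, so that $K(A)\geq n^2$; such a matrix exists because there are $2^{n^2}$ candidates but only $2^{n^2}-1$ programs shorter than $n^2$ bits. I will then argue that any depth-$2$ circuit $\Phi$ with \emph{linear} middle gates that computes $f_A$ on \emph{all} inputs yields a description of $A$ of length $O(L\blog n)$, where $L$ is the number of wires of $\Phi$. Comparing the two bounds then forces $L=\Omega(n^2/\blog n)$.

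The heart of the matter is a linear-algebraic observation that exploits full correctness (not merely weak computation). Let middle gate $h_j$ compute the linear form $\skl{c_j,\x}$ with $c_j\in\gf^n$, let $S_i$ be the set of middle gates feeding the $i$th output, and let $D_i\subseteq\{1,\ldots,n\}$ index the direct input-to-output wires entering it. Since the $i$th output gate computes $\skl{a_i,\x}$ (where $a_i$ is the $i$th row of $A$) but sees only the values $\{\skl{c_j,\x}:j\in S_i\}$ and $\{x_k:k\in D_i\}$, the vector $a_i$ must lie in the $GF(2)$-span of $\{c_j:j\in S_i\}\cup\{\e_k:k\in D_i\}$. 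Indeed, if it did not, there would be a $\v$ orthogonal to all these vectors with $\skl{a_i,\v}=1$; then the output gate would receive identical inputs on $\x$ and $\x+\v$ yet be forced to produce different values, a contradiction. This is precisely the step that fails for \emph{weakly} computing circuits, which is why the $O(n\log n)$ upper bound of \cite{juk1} does not apply here.

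Given this, I would encode $A$ as follows. After deleting middle gates of fanout $0$, every middle gate feeds some output, so the number $r$ of middle gates satisfies $r\leq L$, hence $\blog r=O(\blog n)$. The first-level bipartite graph (which determines every $c_j$ as the neighbourhood of $h_j$; affine offsets cost at most $r\le L$ extra bits and do not enlarge the span) is describable in $O(L\blog n)$ bits, and likewise the second-level graph together with the direct wires, which determine all $S_i$ and $D_i$, costs $O(L\blog n)$ bits. Finally, because $a_i$ lies in the span of the $|S_i|+|D_i|$ vectors entering output $i$, I can pin down $a_i$ by naming one witnessing $GF(2)$-combination, i.e.\ by $|S_i|+|D_i|$ bits; summed over $i$ this is at most $L$ further bits. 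A fixed decoder reconstructs each row $a_i$ from the two graphs and these coefficient bits, so $K(A)\leq O(L\blog n)+O(\log n)=O(L\blog n)$.

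Combining the estimates, $n^2\leq K(A)\leq O(L\blog n)$ gives $L=\Omega(n^2/\blog n)$, as claimed. I expect the main obstacle to be the encoding bookkeeping: specifically, justifying that the number of middle gates contributes only a $\log n$ factor per wire (handled by the fanout-$0$ pruning, which yields $r\le L$) and, more importantly, that the genuinely expensive part of the description---the matrix entries themselves---compresses from the naive $n$ bits per row down to $|S_i|+|D_i|$ bits per row precisely because of the span constraint of the second paragraph. Without the linearity of the middle gates this compression would be unavailable, and this is exactly where the hypothesis of the theorem is used.
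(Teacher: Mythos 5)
Your proposal is correct and follows essentially the same route as the paper: a Kolmogorov incompressibility argument in which the two wiring graphs are encoded with $O(L\log n)$ bits, and the linearity of the middle layer (combined with \emph{exact}, not weak, computation) lets the encoding be completed with at most one extra bit per wire entering the output layer, giving $n^2\leq O(L\log n)$. The only difference is dual bookkeeping: you encode coefficients witnessing that each row $a_i$ lies in the span of the linear forms feeding output $i$ (proved by your collision argument with a vector $\v$ orthogonal to those forms) and reconstruct $A$ directly, whereas the paper equivalently encodes the values of each output gate $g_i$ on a basis of the column space of the submatrix $B_i$ it sees, using that exact computation forces $g_i$ to be linear on that image, and reconstructs the operator by simulation.
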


\begin{proof}
  We use the Kolmogorov complexity argument known as the
  \emph{incompressibility argument} (see \cite{LV} for background).
  Since we have $2^{n^2}$ matrices, some matrix $A$ requires $n^2$
  bits to describe it. Hence, the linear operator $f_A(\x)=A\x$ cannot
  be described using fewer than $n^2-O(1)$ bits, as well.

  Fix an arbitrary depth-$2$ circuit $\Phi$ computing $f_A$, and
  assume that all its gates on the middle layer are linear.  Let $L$
  be the number of wires in~$\Phi$. As before, we may assume that
  there are no direct wires from inputs to outputs.  Our goal is to
  show that, using the circuit $\Phi$, the operator $f_A$ can be
  described using $O(L\log n)$ bits. This will imply the desired lower
  bound $L=\Omega(n^2/\log n)$ on the number of wires.

  Let $r$ be the number of nodes on the middle layer of $\Phi$.  Since
  at these nodes only linear functions are computed, the first level
  (between inputs and middle layer) computes some linear operator
  $\y=B\x$, where $B$ is the $r$-by-$n$ adjacency matrix of the
  bipartite graph formed by the wires joining the gates on the input
  layer with those on the middle layer.  Let also $C$ be the
  $n$-by-$r$ adjacency matrix of the bipartite graph formed by the
  wires joining the gates on the middle layer with those on the output
  layer.  Hence, $L=|B|+|C|$ where $|B|$ denotes the number of $1$s
  in~$B$.

  Using these two matrices $B$ and $C$ as well as the fact that the
  operator computed by the circuit $\Phi$ is linear, we can encode
  this operator using $O(L\log n)$ bits as follows.

  \begin{enumerate}
  \item[$\circ$] Since $|B|+|C|=L$, both matrices $B$ and $C$ can be
    described using $O(L\log n)$ bits, just by describing the
    positions of their $1$-entries.

  \item[$\circ$] The $i$th output gate of $\Phi$ computes $g_i(B\x)$,
    where $g_i:\{0,1\}^r\to\{0,1\}$ is some boolean function depending
    only on rows of $B$ seen by this gate, that is, on rows
    corresponding to the $d_i$ nodes on the middle layer seen by this
    gate. Let $B_i$ be the $d_i\times n$ submatrix of $B$ formed by
    these rows.

    Let $\image{B_i}=\{B_i\x\colon \x\in \{0,1\}^n\}$ be the column
    space of $B_i$.  If this space has dimension $t$ then any $t$
    linearly independent columns of $B$ form its basis.  Take the set
    $B_i'=\{\u_1,\ldots,\u_{t}\}$ of the first $t$ linearly
    independent columns of $B_i$, and call it the \emph{first basis}
    of $\image{B_i}$.

  \item[$\circ$] Encode the behavior of $g_i$ on this basis $B_i'$ by
    the string $g_i(\u_1),\ldots,g_i(\u_t)$ of $t\leq d_i$ bits. The
    entire string, for all $n$ output gates $g_1,\ldots,g_n$, has
    length at most $\sum_{i=1}^n d_i\leq L$.
  \end{enumerate}

  Having this encoding, we can recover the value $g_i(\x)$ of the
  $i$th output gate on a given input $\x\in \{0,1\}^n$ as follows.

  \begin{enumerate}
  \item Compute $\y_i=B_i\x$. We can do this since the $i$th row of
    $C$ tells us what rows of $B$ appear in $B_i$, and we know the
    entire matrix $B$.

  \item Take the first basis $B_i'$ of $\image{B_i}$ and write $\y_i$
    as a linear combination $\y_i=\sum_{k=1}^t\lambda_k \u_k$ of basis
    vectors over $GF(2)$.

  \item Give $z_i=\sum_{k=1}^t\lambda_k g_i(\u_k)\bmod 2$ as an
    output. We can compute this number since we know the values
    $g_i(\u_1),\ldots,g_i(\u_{t})$.
  \end{enumerate}

  Since the circuit computes $A\x$, the $i$th output gate must compute
  the scalar product $\skl{\a_i,\x}$ of input vector $\x$ with the
  $i$th row $\a_i$ of $A$.  Hence, $g_i(B\x)=\skl{\a_i,\x}$, meaning
  that $g_i$ must be \emph{linear} on $\image{B}$. Since $g_i$ can
  only see the middle gates corresponding to the rows of $B_i$, this
  implies that $g_i$ must be linear also on $\image{B_i}$. Thus,
  \[
  z_i=\sum_{k=1}^t\lambda_k g_i(\u_k)=g_i\Big(\sum_{k=1}^t\lambda_k
  \u_k \Big) =g_i(\y_i)=g_i(B_i\x)=g_i(B\x)\,,
  \]
  that is, $z_i$ is a scalar product of $\x$ with the $i$th row of
  $A$, as desired.
\end{proof}

\section{Conclusion}

We have shown that, even when arbitrary boolean functions can be used
as gates, some operators $f:\{0,1\}^n\to\{0,1\}^n$ require about $n^2$
wires.  We have also shown that some \emph{linear} operators require
about $n^2/\log n$ wires in depth-$2$ circuits, if either all output
gates or all gates on the middle layer are required to be linear.

We conjecture that the same lower bound for depth-$2$ circuits
computing linear operators should also hold without any restrictions
on used gates.

\end{document}